\documentclass[a4paper,11pt]{article}

\usepackage[colorlinks=true]{hyperref}

\usepackage[top=3cm, bottom=4cm, left=4cm, right=4cm]{geometry}

\usepackage{amsmath}
\usepackage{amssymb}
\usepackage{amsthm}

\usepackage[footnotesize]{caption}
\usepackage{tikz}
\usetikzlibrary{shapes}

\usepackage{graphicx}
\usepackage{subcaption}

\usepackage{authblk}

\usepackage{breakurl}

%
%
\newtheorem{theorem}{Theorem}[section]

\newtheorem{lemma}[theorem]{Lemma}
\newtheorem{corollary}[theorem]{Corollary}

\theoremstyle{definition}
\newtheorem{definition}[theorem]{Definition}

\newtheorem{example}[theorem]{Example}
\newtheorem*{definition*}{Definition}
\newtheorem*{construction*}{Construction}

%
%

\DeclareMathOperator{\sdgon}{sdgon}

%
%
\tikzstyle{vertex}=[circle, draw, fill=black, inner sep=0pt, minimum width=4pt]
\tikzstyle{added}=[diamond, fill=gray, inner sep=1.5pt, minimum width=4pt]
\tikzstyle{edge} = [line width = 1pt]

\title{Stable divisorial gonality is in NP}

\author[1, 2]{\small Hans L.\ Bodlaender\thanks{This author was partially supported by the Networks project, founded by the Dutch Organization for
Scientific Research NWO.}}
\author[1]{\small Marieke van der Wegen}
\author[1]{\small Tom C.\ van der Zanden}

\affil[1]{\footnotesize Department of Information and Computing Sciences, Universiteit Utrecht, Princetonplein 5, 3584 CC Utrecht, The Netherlands}
\affil[2]{\footnotesize Department of Mathematics and Computer Science, Eindhoven University of Technology, PO Box 513, 5600 MB Eindhoven, The Netherlands}

\date{}

\begin{document}
	\maketitle
	
	\begin{abstract}
Divisorial gonality and stable divisorial gonality are graph parameters, which have an origin in algebraic geometry. Divisorial gonality
of a connected graph $G$ can be defined
with help of a chip firing game on $G$. The stable divisorial gonality of $G$ is the minimum divisorial gonality over all subdivisions of edges
of $G$.

		In this paper we prove that deciding whether a given connected graph has stable divisorial gonality at most a given integer $k$ belongs to the class NP. Combined with the result that (stable) divisorial gonality is NP-hard by Gijswijt, we obtain that stable divisorial gonality is NP-complete. The proof consist of a partial certificate that can be verified by solving an {Integer Linear Programming} instance. As a corollary, we have that
the number of subdivisions needed for minimum stable divisorial gonality of a graph with $n$ vertices is bounded by $2^{p(n)}$ for a polynomial $p$.
	\end{abstract}

	\section{Introduction}
	The notions of the divisorial gonality and stable divisorial gonality of a graph find their origin in algebraic geometry and are related to the abelian sandpile model (cf.\ \cite{boek}). The notion of divisorial gonality 
was introduced by Baker and Norine~\cite{Baker,BakerN07}, under the name gonality. As there are several
 different notions  of gonality in use (cf.\ \cite{Baker,Caporaso,CKK}), we add the term \textit{divisorial}, following \cite{Caporaso}. See \cite[Appendix A]{CKK} for an overview of the different notions. 
 
Divisorial gonality and stable divisorial gonality have definitions in 
terms of a chip firing game. In this chip firing game, played on a connected multigraph $G =(V,E)$, each vertex has a non-negative number of chips. 
When we {\em fire} a set of vertices $S\subseteq V$, we move from each vertex $v\in S$ one chip over each edge with $v$ as endpoint. Each vertex $v$
in $S$ has its number of chips decreased by the number of edges from $v$ to a neighbour not in $S$, and each vertex $v$ not in $S$ has its number of chips increased
by the number of edges from $v$ to a neighbour in $S$.
Such a firing move is only allowed when after the move, each vertex still has a nonnegative number of chips. The {\em divisorial gonality} of
a connected graph $G$ can be defined as the minimum number of chips in an initial assignment of chips (called {\em divisor}) such that for each vertex $v\in V$, there
is a sequence of allowed firing moves resulting in at least one chip on $v$. Interestingly, this number equals the number for a {\em monotone}
variant, where we require that each set that is fired has the previously fired set as a subset. See Section~\ref{section:definitions} for
precise definitions.

A variant of divisorial gonality is {\em stable divisorial gonality}. The stable divisorial gonality of a graph is the minimum of
the divisorial gonality over all subdivisions of a graph; we can subdivide the edges of the graph any nonnegative number of
times. (In the application in algebraic geometry, the notion of {\em refinement} is used. Here, we can subdivide
edges but also add new degree one vertices to the graph in a refinement, but as this
never decreases the number of chips needed, we can ignore the possibility of adding leaves. Thus, we use subdivisions 
instead of refinements.) 

In this paper, we study the complexity of computing the stable divisorial gonality of graphs: i.e., we look at the complexity of the \textsc{Stable Divisorial Gonality} problem:
given an undirected graph $G=(V,E)$ and an integer $K$, decide whether the stable divisorial gonality of $G$ is at most $K$.
It was shown by  Gijswijt~\cite{Gijswijt} that divisorial gonality is NP-complete. The same reduction gives that stable divisorial gonality
is NP-hard. However, membership of stable divisorial gonality in NP is not trivial: it is unknown how many subdivisions are needed
to obtain a subdivision with minimum divisorial gonality. In particular, it is open whether a polynomial number of edge subdivisions are
sufficient. 

In this paper, we show that stable divisorial gonality belongs to the class NP. We use the following proof technique, which we think is interesting in its own right: we give partial certificates
that describe only some aspects of a firing sequence. Checking if a partial certificate indeed corresponds
to a solution is non-trivial, but can
done by solving an integer linear program. Membership in NP follows by adding to the partial certificate, that describes aspects of the firing sequence,
a certificate for the derived ILP instance.
As a corollary, we have that
the number of subdivisions needed for minimum stable divisorial gonality of a graph with $n$ vertices is bounded by $2^{p(n)}$ for a polynomial $p$.

It is known that treewidth is a lower bound for stable divisorial gonality \cite{JosseGijswijtTreewidth}. The stable divisorial gonality of a graph is at most the divisorial gonality, but this inequality can be strict, see for example \cite[Figure 1]{BodewesBCW17}. 

We finish this introduction by giving an overview of the few previously known results on the algorithmic complexity of (stable) divisorial gonality.
Bodewes et al.~\cite{BodewesBCW17} showed
that deciding whether a graph has stable divisorial gonality at most 2,
and whether it has divisorial gonality at most 2 can be done in $O(n \log n +m)$ 
time. From \cite{Bruyn2012reduced} and \cite{BakerShokrieh}, it follows that divisorial gonality
belongs to the class XP. It is open whether stable divisorial gonality is in XP.
NP-hardness of the notions was shown by Gijswijt \cite{Gijswijt}.

	\section{Preliminaries}
	\label{section:definitions}
In this paper, we assume that each graph is a connected undirected multigraph, i.e., we allow parallel
edges. In the algebraic number theoretic application of (stable) divisorial gonality, graphs can also
have selfloops (edges with both endpoints at the same vertex), but as the (stable) divisorial gonality of graph
does not change when we remove selfloops, we assume that there are no selfloops.

	A \emph{divisor} $D$ is a function $D\colon V(G) \to \mathbb{Z}$. The \emph{degree} of a divisor is $\deg(G) = \sum_{v\in V} D(v)$. 
    We call a divisor \emph{effective} if $D(v) \geq 0$ for all vertices $v$. 
     Let $D$ be an effective divisor and $A$ a set of vertices. We call $A$ \emph{valid}, if for all vertices $v \in A$ it holds that $D(v)$ is at least the number of edges from $v$ to a vertex outside $A$. When we \emph{fire} a set $A$, we obtain a new divisor: for every vertex $v\in A$, the value of $D(v)$ is decreased by the number of edges from $v$ to vertices outside $A$ and for every vertex $v\notin A$, the value $D(v)$ is increased by the number of edges from $v$ to $A$. We are only allowed to fire valid sets, so that the divisor obtained is again effective.

    Two divisors $D$ and $D'$ are called {\em equivalent}, if there is an increasing sequence of sets $A_1 \subseteq A_2 \subseteq \ldots \subseteq A_k \subseteq V$ such that for every $i$ the set $A_i$ is valid after we fired $A_1, A_2, \ldots, A_{i-1}$ starting from $D$, and firing $A_1, A_2, \ldots, A_k$ yields $D'$. We write $D \sim D'$ to denote that two divisors are equivalent. For two equivalent divisors $D$ and $D'$, the difference $D' - D$ is called \emph{transformation} and the sequence $A_1, A_2, \ldots, A_k$ is called a \emph{level set decomposition} of this transformation. 
A divisor $D$ \emph{reaches} a vertex $v$ if it is equivalent to a divisor $D'$ with $D'(v) \geq 1$. 
  
A subdivision of a graph $G$ is a graph $H$ obtained from $G$ by applying a nonnegative number of times the following operation: take an edge between two vertices $v$ and $w$ and replace this edge by two edges to
a new vertex $x$.

The \emph{stable divisorial gonality} of a graph $G$ is the minimum number $k$ such that there exists a subdivision $H$ of $G$ and a divisor on $H$ with degree $k$ that reaches all vertices.  

There are several equivalent definitions, which we omit here. If we do not require that the sequence of
firing sets is increasing, i.e., we omit the requirements $A_i \subseteq A_{i+1}$, then we still have the
same graph parameter (see \cite{Bruyn2012reduced}). The notion of a firing set can be replaced by an algebraic operation
(see \cite{BakerN07}); instead of subdivisions, we can use {\em refinements} where we allow that we add subdivisions and 
trees, i.e., we can repeatedly add new vertices of degree one. The definition we use here is most intuitive and useful for our proofs.

	\section{A (partial) certificate}

	For a yes-instance $(G,k)$ of the problem, there is a subdivision $G'$ and a divisor $D$ with $k$ chips that reaches all vertices. We do not know whether the number of subdivisions in $G'$ is polynomial in the size of the graph, i.e.\ in the number of vertices and edges of the graph, so we cannot include $G'$ in a polynomial certificate for this instance. But the chips in $D$ can be placed on added vertices of $G'$, so we cannot include $D$ in our certificate either. We will prove that when we subdivide
every edge once, we can assume that there is a divisor $D'$ that reaches all vertices and has all chips on vertices of this new graph, and hence we can include $D'$ in a polynomial certificate.

	\begin{definition}
		Let $G$ be a graph. Let $G_1$ denote the graph obtained by subdividing every edge of $G$ once. 
	\end{definition}
	
	\begin{lemma}\label{lem:chipsOnRefinement}
		Let $G$ be a graph. The stable divisorial gonality of $G$ is at most $k$ if and only if there is a subdivision
        $H$ of $G_1$ and a divisor $D$ on $H$ such that \begin{itemize}
			\item $D$ has degree at most $k$, 
			\item $D$ reaches all vertices of $H$,
			\item $D$ has only chips on vertices of $G_1$. 
		\end{itemize}
	\end{lemma}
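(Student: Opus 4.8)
The forward implication is immediate and uses only the first two bullet points. Since $G_1$ is a subdivision of $G$, and a subdivision of a subdivision is again a subdivision, any $H$ as in the statement is itself a subdivision of $G$. A divisor $D$ on $H$ of degree at most $k$ that reaches every vertex of $H$ therefore exhibits a subdivision of $G$ carrying a degree-$\le k$ divisor reaching all vertices, which is exactly the condition ``stable divisorial gonality at most $k$''; the third bullet plays no role here.

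For the converse I would start from the definition: if the stable divisorial gonality of $G$ is at most $k$, there is a subdivision $G'$ of $G$ and an effective divisor $D_0$ on $G'$ with $\deg(D_0)\le k$ that reaches every vertex. Two defects must be repaired. First, $G'$ need not be a subdivision of $G_1$, because some edges of $G$ may not have been subdivided at all; I would subdivide each such edge once more, and then, by designating one internal vertex on every original edge as ``its'' midpoint, regard the resulting graph as a subdivision of $G_1$. This requires the auxiliary claim that subdividing a single edge preserves the existence of a degree-$\le k$ divisor reaching all vertices, which I would prove by lifting the firing sequences that witness reachability in $G'$ to the subdivided graph (placing $0$ chips on the new vertex). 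Second, $D_0$ may place chips on \emph{added} vertices, i.e.\ the degree-$2$ vertices that lie on the threads between consecutive $G_1$-vertices and are not themselves vertices of $G_1$.

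The core of the argument is removing this second defect by replacing $D_0$ with an \emph{equivalent} divisor $D$ whose support avoids all added vertices. Because equivalent divisors have the same degree and reach exactly the same set of vertices, such a $D$ automatically satisfies all three bullet points. Concretely, I would work thread by thread: on a thread running from one $G_1$-vertex $s$ to another $G_1$-vertex $t$, fire intervals of added vertices so as to push their chips outward onto $s$ and $t$. Firing an interval of internal thread vertices is valid using only the local chip counts at the two endpoints of the interval, and after introducing enough extra subdivisions there is always room to perform the required shifts; since every firing keeps us inside the equivalence class of $D_0$, the final $D$ is equivalent to $D_0$.

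The main obstacle is precisely this relocation step. The difficulty is that firing is a \textbf{global} operation rather than a local ``slide one chip'' move: an interval of added vertices can be fired only when both of its boundary vertices already carry a chip, so a single isolated chip sitting deep inside a long thread cannot be moved by any purely local firing, and must instead be shifted by firing an entire half of the thread, which propagates into the rest of the graph and disturbs other chips as well. Consequently I must argue that the whole process \emph{terminates} and does not merely shuffle chips from one added vertex to another indefinitely. I expect to control this with a monotone potential — for instance the total distance from the chips to their nearest $G_1$-vertices, or simply the number of chips on added vertices — that each admissible firing provably decreases, combined with the observation that further subdivisions can always supply the room a given move needs. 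Making this termination argument rigorous, together with the single-edge-subdivision claim from the first repair step, is where the real work lies; everything else is bookkeeping.
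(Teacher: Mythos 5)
Your forward direction matches the paper's. In the converse there is a genuine gap, and it sits exactly where you yourself locate it: the relocation of a lone chip. You commit to producing a divisor whose support avoids \emph{all} added vertices, after fixing in advance which internal vertex of each edge is the $G_1$-midpoint; a single chip deep inside a thread then cannot be moved by any local interval firing, and your fallback --- fire half the thread and control the global side effects with a potential --- does not obviously work: firing a set whose boundary crosses other threads pushes chips \emph{from} $G_1$-vertices \emph{onto} added vertices elsewhere, so neither of your candidate potentials (number of chips on added vertices, total distance to the nearest $G_1$-vertex) is monotone, and no termination proof is in sight. The paper avoids the problem entirely with two ideas your sketch is missing. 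First, only \emph{pairs} of chips on the same original edge are ever moved: if chips sit on $a_i$ and $a_j$ ($i\le j$) among the vertices added to an edge $uv$ of $G$, one fires the intervals $\{a_h \mid i\le h\le j\}$, $\{a_h \mid i-1\le h\le j+1\}$, $\ldots$; each such firing is valid because \emph{both} interval endpoints carry a chip, it disturbs nothing outside the edge, and the two chips march outward until one lands on $u$ or $v$, so termination is immediate. Second, the process stops when at most one chip per edge remains on added vertices, and that is already enough: the lemma only requires support in $V(G_1)$, and $G_1$ has one midpoint vertex per edge of $G$, so the identification of $H$ as a subdivision of $G_1$ --- i.e., \emph{which} added vertex plays the role of the midpoint --- can be chosen \emph{after} the relocation, placing the surviving chip on a $G_1$-vertex. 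Fixing the midpoints before relocating, as you do, is precisely what creates the unmovable-chip obstacle.

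A smaller but real soft spot is your first repair step. Subdividing only the not-yet-subdivided edges is a non-uniform operation, and your proposed proof of the auxiliary claim --- lift the firing sequences, placing $0$ chips on the new vertex --- does not go through as stated: in a level set decomposition where $x$ enters the firing sets before $y$, the lifted sequence strands one chip in transit on the new vertex $m$, so $y$ receives one chip fewer than in the original, which can break reachability of $y$ itself, and there is no generally valid firing that flushes the stranded chip onward. The paper sidesteps this by subdividing \emph{every} edge of $H$ once (a uniform subdivision) and invoking \cite[Corollary 3.4]{HKN} for preservation of reachability; you should do the same, since non-uniform subdivisions are exactly where divisorial gonality misbehaves (this is why $\sdgon$ can differ from $\dgon$ in the first place).
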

	\begin{proof}
		Suppose that there exists a subdivision $H$ of $G_1$ and a divisor with the desired properties. Then it is clear that the stable divisorial gonality of $G$ is at most $k$, since $H$ is a subdivision
of $G$ as well. 
		
		Suppose that $G$ has stable divisorial gonality at most $k$. Then there is a subdivision
$H$ of $G$ and a divisor $D$ on $H$ with degree at most $k$ that reaches all vertices. If not every edge of $G$ is subdivided in $H$, then subdivide every edge of $H$ to obtain $H_1$. Consider the divisor $D$ on $H_1$. By \cite[Corollary 3.4]{HKN} $D$ reaches all vertices of $H_1$. 
		
		Let $e=uv$ be an edge of $G$, and let $a_1, a_2, \ldots, a_r$ be the vertices that are added to $e$ in $H_1$. Suppose that $D$ assigns more than one chip to those added vertices, say it assigns one chip to $a_i$ and one to $a_j$ with $i\leq j$. Then we can fire sets $\{a_h \mid i\leq h\leq j\}$, $\{a_h \mid i-1 \leq h\leq j+1\}$, $\ldots$ until at least one of the chips lies on $u$ or $v$. Repeat this procedure untill there is at most one chip assigned to every edge of $G$. The divisor obtained in this way is equivalent to $D$, so it reaches all vertices of $H_1$ and has degree at most $k$. Thus we have obtained a divisor with the desired properties. 
	\end{proof}
	
	Now a certificate can contain the graph $G_1$ and the divisor $D$ as in Lemma \ref{lem:chipsOnRefinement}. From now on we assume $D$ to have chips on vertices of $G_1$ only. 
	A divisor $D$ as in Lemma \ref{lem:chipsOnRefinement} reaches all vertices, so for every vertex $w\in V(G_1)$ there is a divisor $D_w \sim D$ with a chip on $w$ and a level set decomposition $A_1, A_2, \ldots, A_r$ of the transformation $D_w-D$. Again we do not know whether $r$ is polynomial in the size of $G$, so we cannot include this level set decomposition in the certificate. However, we can define some of the sets to be `relevant', and include all relevant sets in the certificate. 
	
	\begin{definition}
		Let $G$ be a graph and $G'$ a subdivision of $G$. Let $D$ be a divisor on $G'$ and $A_1, A_2, \ldots, A_r$ a level set decomposition of a transformation $D' - D$. Let $D_0, D_1, \ldots, D_r$ be the associated sequence of divisors. We call $A_i$ \emph{relevant} if any of the following holds:
		\begin{itemize}
			\item $A_i$ moves a chip from a vertex of $G$, i.e.\ there is a vertex $v$ of $G$ such that $D_i(v) - D_{i-1}(v) < 0$, or
			\item $A_i$ moves a chip to a vertex of $G$, i.e.\ there is a vertex $v$ of $G$ such that $D_i(v) - D_{i-1}(v) > 0$, or
			\item there is a vertex of $G$ such that $A_i$ is the first level set that contains this element, i.e.\ $(A_i\backslash A_{i-1}) \cap V(G)$ is not empty. 
		\end{itemize}
		\label{definition:relevant}
	\end{definition}
	
	\begin{lemma}
	Let $G$ be a graph and $G'$ a subdivision of $G$. Let $D$ be a divisor of degree $k$ on $G'$ and $A_1, A_2, \ldots, A_r$ a level set decomposition of a transformation $D' - D$. Let $D_0, D_1, \ldots, D_r$ be the associated sequence of divisors.
	Then there are at most $2kn+n$ relevant level sets.
	\label{lemma:boundrelevant}
	\end{lemma}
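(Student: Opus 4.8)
The plan is to bound the three kinds of relevant sets separately and add the bounds. The sets of the third type (first inclusion of a vertex of $G$) are the easiest. Since the level sets are increasing, $A_1 \subseteq A_2 \subseteq \cdots \subseteq A_r$, every vertex of $G$ lies in $A_i \setminus A_{i-1}$ for at most one index $i$. Hence there are at most $|V(G)| = n$ indices $i$ with $(A_i \setminus A_{i-1}) \cap V(G) \neq \emptyset$, so there are at most $n$ relevant sets of the third type.

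For the first two types, the key observation is that the chip count at any fixed vertex evolves unimodally along the firing sequence. By the definition of firing, the change $D_i(v) - D_{i-1}(v)$ caused by firing $A_i$ is nonpositive when $v \in A_i$ (then $v$ only loses chips across the boundary) and nonnegative when $v \notin A_i$ (then $v$ only receives chips from $A_i$). Because the level sets are increasing, once $v$ enters some $A_j$ it belongs to all later sets; so there is a threshold index before which $v \notin A_i$ and from which on $v \in A_i$. Consequently the sequence $D_0(v), D_1(v), \ldots, D_r(v)$ first weakly increases, up to some peak value $M_v$, and then weakly decreases.

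I would then bound, for a fixed $v \in V(G)$, the number of indices $i$ with $D_i(v) \neq D_{i-1}(v)$. Each strict increase raises the integer chip count by at least one, and the total rise over the increasing part is $M_v - D_0(v) \le M_v$; symmetrically, each strict decrease lowers it by at least one, with total fall $M_v - D_r(v) \le M_v$. Since every intermediate divisor $D_i$ is effective and firing preserves the degree $k$, no vertex ever holds more than $k$ chips, so $M_v \le k$. Therefore the chip count at $v$ changes at most $2M_v \le 2k$ times. Summing over the $n$ vertices of $G$, and noting that every index relevant of the first or second type contributes at least one to this sum, the number of such indices is at most $2kn$. Combining with the at most $n$ sets of the third type yields the claimed bound $2kn + n$.

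The crux of the argument is the unimodality step together with its bookkeeping: one must verify that the sign of $D_i(v) - D_{i-1}(v)$ is governed solely by whether $v \in A_i$, and then combine monotonicity of the level sets with integrality and effectivity of the divisors to convert a bound on total variation into a bound on the number of strict changes. The degree bound $M_v \le k$, which relies on degree-preservation under firing and on all intermediate divisors being effective, is exactly what keeps the per-vertex count from growing with $r$. None of the individual steps is deep; the care lies in locating the peak of the unimodal sequence at the threshold where $v$ first enters the level sets, and in observing that any double-counting across the three types only strengthens the resulting upper bound.
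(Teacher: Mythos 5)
Your proof is correct, but it takes a genuinely different route from the paper's. The paper argues by tracking individual chips: since the level sets are increasing, each of the $k$ chips can arrive at each vertex of $G$ at most once and depart from it at most once, giving at most $kn$ sets of the first type, $kn$ of the second, and (as in your argument) $n$ of the third. You instead never track chips at all: you work with the per-vertex chip counts $D_0(v), D_1(v), \ldots, D_r(v)$, observe that the sign of each change is governed by whether $v \in A_i$, use monotonicity of the level sets to conclude this sequence is unimodal with peak $M_v \le k$ (by effectivity and degree-preservation), and convert the bounded total variation $2M_v$ into a bound of at most $2k$ strict changes per vertex via integrality, summing to $2kn$. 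The two approaches buy different things: the paper's chip-tracking argument is shorter and foreshadows the labeled-chip bookkeeping used later in the certificate, but at the point of this lemma chips are not yet formally labeled, so the claim ``each chip can reach each vertex at most once'' is left at an intuitive level; your potential-function argument is fully rigorous using only the divisor values and makes explicit exactly where the monotonicity $A_1 \subseteq A_2 \subseteq \cdots \subseteq A_r$, effectivity of the intermediate divisors, and preservation of the degree $k$ enter. Your closing remark that double-counting across the three types only strengthens the bound is also the right observation, and it applies equally to the paper's count.
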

	
	\begin{proof}
	Each chip can reach each vertex at most once and can depart at most once from each vertex. So, there are at most $kn$ sets $A_i$ that fulfill the
	first condition of Definition~\ref{definition:relevant} and at most $kn$ sets that fulfill the second condition. Clearly, the number
	of sets $A_i$ that fulfill the third condition is upper bounded by the number of vertices of $G$.
	\end{proof}

    This lemma shows that the number of  relevant set in a level set decomposition is polynomial.
    However, the number of elements of each of these sets can still be exponential, so we cannot include those sets in a polynomial certificate. Instead, for a relevant set $A_i$, we will include $A_i \cap V(G_1)$ in our certificate. Moreover, for each relevant set, we will descibe which chips move from/to a vertex of $G_1$ by firing $A_i$. When chip $j$ is moved from a vertex $v$ along edge $e$, we include a tuple $(v,j,-1,e)$, and when a chip $j$ is moved to a vertex $v$ along edge $e$, we include a tuple $(v,j,+1,e)$. 
	
	Now, a {\em partial certificate} $\mathcal{C}$ consists of 
	\begin{itemize}
		\item a divisor $D$ of degree $k$ on $G_1$, where the chips are labelled $1, 2, \ldots, k$, 
		\item for every vertex $w$, a series of pairs $(A_{w,1}, M_{w,1})$, 
		$(A_{w,2}, M_{w,2})$, $\ldots$, $(A_{w,a_w},$ $M_{w,a_w})$,
	such that 
	\begin{itemize}
		\item $A_{w,1} \subseteq A_{w,2}\subseteq \ldots \subseteq A_{w,a_w} \subseteq V(G_1)$, 
		\item $M_{w,i} = \{(v,j,\sigma,e) \mid v\in V(G_1), 1\leq j\leq k, \sigma \in \{-1, +1\}, e\in E(G_1)\}$.
	\end{itemize}
	\end{itemize}
	This partical certificate should satisfy a lot of conditions, which are implicit in the intuitive explanation of this partial certificate. We list the intuition of these conditions below and give the formal definition between brackets. 
	\begin{description}
		\item[Incidence requirement] The edge along which a chip is fired is incident to the vertex from/to which it is fired, (i.e.\ for every tuple $(v,j,\sigma,e)$ it holds that $e$ is incident to $v$). 
		\item[Departure requirement] If a chip leaves a vertex, then this vertex is fired and its neighbour is not (i.e.\ if $(v,j,-1,uv) \in M_{w,i}$, then $v\in A_{w,i}$ and $u\notin A_{w,i}$).
		\item[Arrival requirement] If a chip arrives at a vertex, then this vertex is not fired and its neighbour is (i.e.\ if $(v,j,+1,uv) \in M_{w,i}$, then $v\notin A_{w,i}$ and $u\in A_{w,i}$).
		\item[Unique departure per edge requirement] For every vertex at most one chip leaves along each edge (i.e.\ for every $M_{w,i}$, $(v, j_1, -1, e), (v, j_2, -1, e) \in M_{w,i}$ it holds that $j_1 = j_2$).
		\item[Unique arrival per edge requirement] For every vertex at most one chip arrives along each edge (i.e.\ for every $M_{w,i}$, $(v, j_1, +1, e), (v, j_2, +1, e) \in M_{w,i}$ it holds that $j_1 = j_2$).
		\item[Unique departure per chip requirement] A chip can leave a vertex along at most one edge (i.e.\ for every $M_{w,i}$, $(v_1, j, -1, e_1), (v_2, j, -1, e_2) \in M_{w,i}$ it holds that $v_1 = v_2$ and $e_1=e_2$).
		\item[Unique arrival per chip requirement] A chip can arrive at a vertex along at most one edge (i.e.\ for every $M_{w,i}$, $(v_1, j, +1, e_1), (v_2, j, +1, e_2) \in M_{w,i}$ it holds that $v_1 = v_2$ and $e_1=e_2$).
		\item[Immediate arrival requirement] If a chip leaves a vertex $v$ and arrives at another vertex $u$ at the same time, then the chip is fired along the edge $uv$ (i.e.\ for every $M_{w,i}$, if $(v_1, j, -1, e_1), (v_2, j, +1, e_2) \in M_{w,i}$, then $e_1 = e_2 = v_1v_2$).
		\item[Departure location requirement] If a chip leaves a vertex, then this chip was on this vertex (i.e.\ if $(v,j,-1,e) \in M_{w,i}$, then let $i'<i$ be the greatest index such that there is a tuple $(u,j,\sigma,e') \in M_{w,i'}$, then there is a tuple $(v, j, +1, e')\in M_{w,i'}$ for some $e'$. If no such index $i'$ exists, then $D$ assigns $j$ to $v$). 
		\item[Arrival location requirement] If a chip arrives at a vertex, then this chip was moving along an edge to this vertex (i.e.\ if $(v,j,+1,e) \in M_{w,i}$, then either $(u,j,-1,e) \in M_{w,i}$ where $u\neq v$, or let $i'< i$ be the greatest index such that there is a tuple $(u,j,\sigma,e') \in M_{w,i'}$, then there is a tuple $(u, j, -1, e)\in M_{w,i'}$ with $u\neq v$ and $(v,j,+1,e) \notin M_{w,i'}$). 
		\item[Outgoing edges requirement] A chip is fired along each outgoing edge (i.e.\ for every $A_{w,i}$, for every edge $uv$ such that $u\in A_{w,i}$, $v\notin A_{w,i}$, if $(u,j,-1,uv) \notin M_{w,i}$ for all $j$, then there is a $1\leq j\leq k$ and an $i'<i$ such that $(u,j,-1,uv) \in M_{w,i'}$ and $(v,j,+1,uv) \notin M_{w,i''}$ for all $i' \leq i'' < i$). 
		\item[Previous departure requirement] If a chip leaves a vertex $v$ along some edge $e$, and $v$ was in the previous firing set as well, then a chip left $v$ along $e$ when the previous set was fired (i.e.\ if $v\in A_{w,i}$, $v\in A_{w,i+1}$ and $(v,j,-1,e) \in M_{w,i+1}$ for some $j$ and $e$, then $(v,j',-1,e) \in M_{w,i}$ for some $j'\neq j$). 
		\item[Next arrival requirement] If a chip arrives at a vertex $v$ along some edge $e$, and $v$ is not in the next firing set as well, then a chip will arrive at $v$ along $e$ when the next set is fired (i.e.\ if $v\notin A_{w,i}$, $v\notin A_{w,i+1}$ and $(v,j,+1,e) \in M_{w,i}$ for some $j$ and $e$, then $(v,j',+1,e) \in M_{w,i+1}$ for some $j'\neq j$). 
		\item[Reach all vertices requirement] For all vertices $w$, at the end of the sequence $A_{w,1}, \ldots, A_{w,a_w}$, there is a chip on $w$ 
		(i.e.\ for every vertex $w$, either there is a $1\leq j\leq k$ and an $i$ such that $(w,j,+1,e) \in M_{w,i}$ for some $e$ and $(w,j,-1,e') \notin M_{w,i'}$ for all $i' \geq i$, or there is a $1\leq j \leq k$ that $D$ assigns to $w$ and $(w,j,-1,e) \notin M_{w,i}$ for all $i$). 
	\end{description}

	Now for a given graph $G$, and such a partial certificate $\mathcal{C}$, we want to decide whether there is a subdivision
of $G_1$ such that for every vertex $w$ there is a divisor $D_w \sim D$ with a chip on $w$ such that the sets $A_{w,1}, \ldots, A_{w,a_w}$ are the relevant sets of the level set decomposition of the transformation $D_w-D$. To decide this, we will construct an integer linear program $\mathcal{I}_\mathcal{C}$, such that this program has a solution if and only if there is such a subdivision of $G_1$.
Since integer linear programming is in NP, we know that if there is a solution to $\mathcal{I_C}$, then there is a polynomial certificate $\mathcal{D}$ for the ILP instance. In order to obtain a certificate for
for the \textsc{Stable Divisorial Gonality} problem, we add the certificate for the ILP instance to
the partial certificate, as defined above. Thus, 
a certificate for  the \textsc{Stable Divisorial Gonality} problem
is then of the form $(\mathcal{C}, \mathcal{D})$. 
	
	For the integer linear program $\mathcal{I_C}$, we introduce some variables. For every vertex $w \in V(G_1)$ and every $1\leq i< a_w$, we define a variable $t_{w,i}$. This variable represents the number of sets that is fired between $A_{w,i}$ and $A_{w,i+1}$, including $A_{w,i}$ and excluding $A_{w,i+1}$. For every edge $e \in E(G_1)$, we define a variable $l_e$, which represents the length of $e$, i.e.\ the number of edges that $e$ is subdivided into. Now we construct $\mathcal{I_C}$: 
	\begin{itemize}
	\item For every edge $e$, include the inequality $l_{e} \geq 1$. (Every edge has length at least one.)
	\item For every vertex $w$ and $1\leq i< a_w$, include the inequality $t_{w,i} \geq 1$. (The set $A_{w,i}$ is fired, so $t_{w,i} \geq 1$.) 
	\item For every edge $e=uv$ such that there is a set $M_{w,i}$ with $(v,j,-1,e)$, $(u,j,+1,e) \in M_{w,i}$ for some $j$, include $l_e = 1$ in $\mathcal{I_C}$. (If a chip arrives immediately after it is fired, then the edge has length one.)
	\item For every vertex $w$ and $1\leq i< a_w$ such that there are $v,j_1, j_2,e$ such that  $(v,j_1,-1,e) \in M_{w,i}$ and $(v,j_2,-1,e) \in M_{w,i+1}$, include $t_{w,i} = 1$ in $\mathcal{I_C}$. (If there is a set $A$ that is fired between $A_{w,i}$ and $A_{w,i+1}$, then $A_{w,i} \subseteq A \subseteq A_{w,i+1}$. It follows that $A$ fires a chip from $v$ along $e$ as well. But then $A$ is a relevant set. We conclude that $t_{w,i} = 1$.)
	\item For every vertex $w$ and $1\leq i\leq a_w$ such that there are $v,j, e$ such that  $(v,j,+1,e) \in M_{w,i}$, include $t_{w,i} = 1$ in $\mathcal{I_C}$. (Notice that the set fired after $A_{w,i}$ either contains $v$ or causes a chip to arrive at $v$, so this set is relevant.)
	\item For every vertex $w$ and every edge $e = uv$, let $i_0$ be the smallest index such that $(v,j,-1,e) \in M_{w,i_0}$ for some $j$, $i_1$ the greatest index such that $(v,j,-1,e) \in M_{w,i_1}$ for some $j$, $i_2$ the smallest index such that $(u,j,+1,e) \in M_{w,i_2}$ for some $j$ and $i_3$ the greatest index such that $(u,j,+1,e) \in M_{w,i_3}$ for some $j$. Include the following inequalities in $\mathcal{I_C}$: \begin{align}
	(i_1-i_0 +1)l_e - (i_1 - i_0) + (i_3 - i_2) &\geq \sum_{i=i_0}^{i_3} t_{w,i} \label{eq:legroter}\\
	(i_3 - i_2+1) l_e + (i_1 - i_0) - (i_3 - i_2) &\leq \sum_{i=i_0}^{i_3} t_{w,i}. \label{eq:lekleiner}
	\end{align} (There are $i_1-i_0+1$ chips that left $v$ along edge $e$, and $i_3-i_2+1$ chips that arrived at $u$ along $e$. There are $\sum_{i=i_0}^{i_3} t_{w,i}$ sets fired since the first chip left until the last chip arrives, and every of these sets causes one chip to move one step. The chips that arrived at $u$ took $l_e$ steps, the chips that did not arrive took at least one and at most $l_e - 1$ steps. This yields the inequalities.)
	\end{itemize}

	Now a certificate for the stable divisorial gonality problem is a pair $(\mathcal{C}, \mathcal{D})$, where the partial certificate
    $\mathcal{C}$  contains a divisor $D$ on $G_1$ with labeled chips and for every vertex $w$ a series of pairs $(A_{w,1}, M_{w,1}), 
	(A_{w,2}, M_{w,2}), \ldots, (A_{w,a_w},\allowbreak M_{w,a_w})$ that satisfies all requirements above, and where $\mathcal{D}$ is a certificate of the integer linear program $\mathcal{I_C}$.

	\section{Correctness}
	
	It remains to prove that there exists a certificate $(\mathcal{C}, \mathcal{D})$ if and only if $\sdgon(G) \leq k$.

	\begin{lemma}
		Let $G$ be a graph with $\sdgon(G) \leq k$. There exists a certificate $(\mathcal{C}, \mathcal{D})$. 	
        \label{lemma:sdgonimpliceertcertificaat}
	\end{lemma}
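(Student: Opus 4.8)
The plan is to start from the hypothesis $\sdgon(G) \le k$ and produce, constructively, both the partial certificate $\mathcal{C}$ and a solution to the derived ILP $\mathcal{I}_\mathcal{C}$ (which yields $\mathcal{D}$ by NP-membership of integer linear programming). By Lemma~\ref{lem:chipsOnRefinement}, I may assume there is a subdivision $H$ of $G_1$ and a divisor $D$ of degree at most $k$ on $H$ that reaches every vertex and places all its chips on vertices of $G_1$. If the degree is strictly less than $k$, I pad with extra chips so that exactly $k$ chips are placed, and I fix an arbitrary labelling $1,\dots,k$ of these chips. This $D$, restricted to $V(G_1)$, will be the divisor in the partial certificate.

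The core of the construction is to extract, for each vertex $w \in V(G_1)$, the relevant-set data. Since $D$ reaches $w$, there is a divisor $D_w \sim D$ with $D_w(w)\ge 1$ and a level set decomposition $A_1 \subseteq \dots \subseteq A_r$ of $D_w - D$ on $H$. I would first track individual chips through the firing sequence: because each $A_i$ is valid, I can choose a consistent assignment of the $k$ labelled chips to their trajectories, deciding for each fired chip which edge it travels along so that the per-edge and per-chip uniqueness properties hold. Then I define the relevant sets exactly as in Definition~\ref{definition:relevant}, and for each relevant set $A_i$ I set $A_{w,i} := A_i \cap V(G_1)$ and build $M_{w,i}$ by recording the tuples $(v,j,-1,e)$ and $(v,j,+1,e)$ for every chip $j$ that departs from or arrives at a vertex $v \in V(G_1)$ along $e$ when $A_i$ is fired. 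Lemma~\ref{lemma:boundrelevant} guarantees the number of these sets, hence the size of $\mathcal{C}$, is polynomial. I then verify that each of the listed requirements (Incidence, Departure, Arrival, the uniqueness requirements, Immediate arrival, Departure/Arrival location, Outgoing edges, Previous departure, Next arrival, and Reach all vertices) is satisfied; each follows routinely from the fact that the $A_i$ genuinely describe a valid firing sequence and from how relevant sets interleave with non-relevant ones.

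Finally I must exhibit a feasible integer solution to $\mathcal{I}_\mathcal{C}$. The variables $l_e$ are forced by $H$: set $l_e$ to be the number of subdivision-edges into which $e \in E(G_1)$ is split in $H$. The variables $t_{w,i}$ are read off the actual level set decomposition: $t_{w,i}$ is the number of sets fired from $A_i$ up to but excluding $A_{i+1}$, where $A_i, A_{i+1}$ are the relevant sets corresponding to $A_{w,i}, A_{w,i+1}$. With these assignments, each inequality of $\mathcal{I}_\mathcal{C}$ should hold because it was derived precisely to reflect true geometric facts about the firing sequence: $l_e \ge 1$ and $t_{w,i}\ge 1$ are immediate; the forced equalities $l_e = 1$ and $t_{w,i}=1$ hold by the arguments given parenthetically in the construction; and the counting inequalities \eqref{eq:legroter} and \eqref{eq:lekleiner} hold because the total number $\sum_{i=i_0}^{i_3} t_{w,i}$ of steps taken between the first departure and last arrival along $e$ equals the sum of the distances each involved chip travels, with arriving chips traversing all $l_e$ segments and non-arriving chips traversing between $1$ and $l_e-1$ segments.

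The main obstacle I anticipate is the bookkeeping of chip identities across the full (possibly exponential) firing sequence: I need a single globally consistent labelling and trajectory assignment so that the recorded tuples in different $M_{w,i}$ are mutually compatible and the Departure/Arrival location requirements chain correctly. A subtlety is that between two consecutive relevant sets many irrelevant sets are fired, during which chips traverse the interior vertices of the subdivided edges; I must argue that along each edge the motion is a clean ``conveyor'' of chips whose count and timing are captured exactly by $l_e$ and the $t_{w,i}$, which is what makes the counting inequalities tight. Establishing that these irrelevant intermediate sets can always be taken so that each moves every chip by exactly one step along its edge — and that this matches the derivation of \eqref{eq:legroter}--\eqref{eq:lekleiner} — is the delicate point on which feasibility of the ILP rests.
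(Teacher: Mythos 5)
Your proposal follows essentially the same route as the paper's own proof: apply Lemma~\ref{lem:chipsOnRefinement} to get a divisor with all chips on $V(G_1)$, extract for each vertex $w$ the relevant sets of a level set decomposition of $D_w - D$ (intersected with $V(G_1)$) together with the recorded chip-movement tuples, and then read off a feasible ILP solution by setting $l_e$ to the actual subdivision lengths in $H$ and $t_{w,i}$ to the number of sets fired between consecutive relevant sets. The bookkeeping and counting subtleties you flag at the end are real but are glossed over in the paper's proof as well ("Notice that by definition $\mathcal{C}$ satisfies all conditions"), so your sketch is at the same level of rigor as, and matches, the published argument.
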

	\begin{proof}
		By Lemma \ref{lem:chipsOnRefinement} we know that there is a subdivision $H$ of $G_1$ and a divisor $D$ with $k$ chips, all on vertices of $G_1$, that reaches all vertices. Choose a labeling of the chips and let $D$ be the divisor in $\mathcal{C}$. 
		
		For every vertex $w\in G_1$, there is a divisor $D_w \sim D$ with a chip on $w$ and a level set decomposition $A_{w,1}, \ldots, A_{w,a_w}$. Let $A_{w,i_1}, \ldots, A_{w, i_{b_w}}$ be the subsequence consisting of all relevant sets. Let $B_{w,1} = A_{w,i_1} \cap G_1$, $\ldots$, $B_{w,b_w} = A_{w,i_{b_w}} \cap G_1$. 
		
		Fire the sets $A_{w,1}, \ldots, A_{w,a_w}$ in order. For every $i_j$, set $M_{w,j} = \emptyset$. When firing the set $A_{w, i_j}$, check for every chip $h$ whether it arrives at a vertex $v$ of $G_1$ or leaves a vertex $v$ of $G_1$. If so, add the tuple $(v,h,\sigma,e)$ to $M_{w,j}$, where $\sigma = +1$ if $h$ arrives at $v$ and $\sigma = -1$ is $h$ leaves $v$ and $e$ is the edge of $G_1$ along which $h$ moves. 
		
		The divisor $D$ together with the sequences $(B_{w,i}, M_{w,i})$, for every vertex $w\in V(G_1)$, is the partial certificate $\mathcal{C}$. Notice that by definition $\mathcal{C}$ satisfies all conditions: Incidence requirement, Departure requirement, Arrival requirement, Unique departure per edge, Unique arrival per edge, Unique departure per chip, Unique arrival per chip, Immediate arrival, Departure location, Arrival location, Outgoing edges requirement, Previous departure, Next arrival and Reach all vertices. 
		
		For every edge $e$ of $G_1$, define $l_e$ as the number of edges that $e$ is subdivided into in $H$. For every vertex $w$ of $G_1$ and $1\leq j\leq b_w-1$, define $t_{w,i}$ as the number of sets between $A_{w,i+1}$ and $A_{w,i}$, including $A_{w,i}$ and excluding $A_{w,i+1}$. Notice that this is a solution to the integer linear program $\mathcal{I_C}$. So this is a certificate for this program, write $\mathcal{D}$ for this certificate. Now $(\mathcal{C}, \mathcal{D})$ is a certificate for $(G,k)$. 
	\end{proof}

We illustrate our proof with an example.
    
	\begin{figure}
		\centering
		\begin{subfigure}{0.4 \textwidth}
			\centering
			\begin{tikzpicture}
			\node[vertex, label=below:$u$] (u) at (0,0) {};
			\node[vertex, label=below:$v$] (v) at (2,0) {};
			\draw[edge] (u) to[relative, in=150, out=30] (v);
			\draw[edge] (v) to[relative, in=150, out=30] (u);
			\end{tikzpicture}
			\caption{\footnotesize A graph $G$. }
		\end{subfigure}\quad
	\begin{subfigure}{0.4\textwidth}
		\centering
		\begin{tikzpicture}
		\node[vertex, label=below:$u$, label=$7$] (u) at (0,0) {};
		\node[vertex, label=below:$v$] (v) at (2,0) {};
		\draw[edge] (u) to[relative, in=150, out=30] node [pos=.67, added, label=$x_2$] {} node [pos=.33, added, label=$x_1$] {} (v);
		\draw[edge] (v) to[relative, in=150, out=30] node [midway, added, label=below:$y_2$] {} node [pos=.25, added, label=below:$y_3$] {} node [pos=.75, added, label=below:$y_1$] {} (u);
		\end{tikzpicture}
		\caption{\footnotesize A subdivision of $G$ and divisor.} 
	\end{subfigure}
		\caption{}\label{fig:divisor}
	\end{figure}

\begin{example} \label{ex:example}
Consider the graph in Figure \ref{fig:divisor}. Consider the subdivision in Figure \ref{fig:divisor} and the divisor $D$ with 7 chips on $u$. This divisor reaches $v$, for example by firing the following sets: \begin{align*}
\{u\}, \{u\}, \{u\}, \{u, y_1\}, \{u, x_1, y_1\}, \{u, x_1, y_1\}, \\
\{u, x_1, y_1, y_2\}, \{u, x_1,y_1, y_2\}, \{u, x_1, x_2, y_1, y_2\},\\
\{u, x_1, x_2, y_1, y_2, y_3\}, \{u, x_1, x_2, y_1, y_2, y_3\}, \{u, x_1, x_2, y_1, y_2, y_3\}.
\end{align*}
We describe the corresponding partial certificate $(\mathcal{C}, \mathcal{D})$. The divisor $D$ will be included in $\mathcal{C}$. Notice that there are 8 relevant sets. We obtain the following series of pairs, after labelling the chips $1, 2, \ldots, 7$: \\
\begin{tabular}{lp{.7\textwidth}}
	$A_{v,1} = \{u\}$ & $M_{v,1} = \{(u,1,-1, e_1), (u,2,-1, e_2)\}$ \\ 
	$A_{v,2} = \{u\}$ & $M_{v,2} = \{(u,3,-1, e_1), (u,4,-1, e_2)\}$ \\ 
	$A_{v,3} = \{u\}$ & $M_{v,3} = \{(u,5,-1, e_1), (u,6,-1, e_2)\}$ \\ 
	$A_{v,4} = \{u\}$ & $M_{v,4} = \{(u,7,-1, e_1)\}$ \\ 
	$A_{v,5} = \{u\}$ & $M_{v,5} = \{(v,1,1, e_1)\}$ \\ 
	$A_{v,6} = \{u\}$ & $M_{v,6} = \{(v,3,1, e_1), (v,2,1, e_2)\}$ \\ 
	$A_{v,7} = \{u\}$ & $M_{v,7} = \{(v,5,1, e_1), (v,4,1, e_2)\}$ \\ 
	$A_{v,8} = \{u\}$ & $M_{v,8} = \{(v,7,1, e_1), (v,6,1, e_2)\}$ 
\end{tabular}\\
This gives the partial certificate $\mathcal{C}$. The partial certificate $\mathcal{D}$ consists of a solution to the integer linear program $\mathcal{I_C}$. Here, the corresponding program is: 
\begin{equation*}
\begin{aligned}
l_{e_1} &\geq 1 \\
l_{e_2} &\geq 1 \\
t_{v,i} &\geq 1 \qquad \text{ for } i \in \{1,2,\ldots, 8\}\\
t_{v,i} &=1 \qquad \text{ for } i \in \{1,2,3\} \\
t_{v,i} &=1 \qquad \text{ for } i \in \{5,6,7, 8\} \\
4l_{e_1} &\geq \sum_{i=0}^{8} t_{v,i} 
\end{aligned}
\qquad\qquad\qquad
\begin{aligned}
4l_{e_1} &\leq \sum_{i=0}^{8} t_{v,i} \\
3l_{e_2}-1 &\geq \sum_{i=0}^{8} t_{v,i} \\
3l_{e_2}+1 &\leq \sum_{i=0}^{8} t_{v,i} 
\end{aligned}
\end{equation*}
We can simplify this to:
\begin{equation*}
\begin{aligned}
l_{e_1} &\geq 1 \\
l_{e_2} &\geq 1 \\
t_{v,i} &=1 \qquad \text{ for } i \in \{1,2,3, 5, 6, 7, 8\} \\
t_{v,4} &\geq 1 
\end{aligned}
\qquad\qquad\qquad
\begin{aligned}
4l_{e_1} &\geq  t_{v,4} + 7\\
4l_{e_1} &\leq  t_{v,4} +7 \\
3l_{e_2} &\geq  t_{v,4} +7 \\
3l_{e_2} &\leq  t_{v,4} +7
\end{aligned}
\end{equation*}
We see that $l_{e_1} = 3$, $l_{e_2} = 4$, $t_{v,4} = 5$ and $t_{v,i} = 1$ for $i\neq 4$ is a solution to this program, let this solution be the certificate $\mathcal{D}$. 
\end{example}

	\begin{lemma} \label{lem:certificaatImpliceertSdgon}
		Let $G$ be a graph and $k$ a natural number. Suppose that there exists a certificate $(\mathcal{C}, \mathcal{D})$, then $\sdgon(G) \leq k$. 
	\end{lemma}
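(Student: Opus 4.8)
The plan is to prove the converse direction of the correctness argument: given a valid certificate $(\mathcal{C}, \mathcal{D})$, we must reconstruct an actual subdivision $H$ of $G_1$ together with a divisor of degree $k$ that reaches every vertex, thereby witnessing $\sdgon(G) \leq k$. First I would use the solution $\mathcal{D}$ to the integer linear program $\mathcal{I_C}$ to \emph{build} the subdivision $H$: for each edge $e \in E(G_1)$, subdivide $e$ into $l_e$ edges, where $l_e$ is the value given by $\mathcal{D}$. Since every inequality $l_e \geq 1$ holds, each $l_e$ is a genuine positive subdivision count, so $H$ is a well-defined subdivision of $G_1$ (and hence of $G$). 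The divisor $D$ from $\mathcal{C}$ lives on vertices of $G_1 \subseteq V(H)$, so it is a divisor on $H$ of degree $k$.

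The heart of the argument is to show that, for each vertex $w \in V(G_1)$, the partial data $(A_{w,1}, M_{w,1}), \ldots, (A_{w,a_w}, M_{w,a_w})$ can be \emph{expanded} into a genuine level set decomposition on $H$ whose firing moves a chip onto $w$. Here I would use the $t_{w,i}$ values from $\mathcal{D}$: between the relevant set $A_{w,i}$ and $A_{w,i+1}$ we insert $t_{w,i}$ intermediate firing sets, each one advancing by one step the chips that are in transit along the subdivided edges of $H$. Concretely, a chip that departs $v$ along edge $e=uv$ (recorded by $(v,j,-1,e) \in M_{w,i_0}$) must traverse the $l_e$ internal edges of $e$ in $H$ one step at a time, arriving at $u$ exactly $l_e$ fired sets later; the interior-vertex firing sets are forced by requiring each such transiting chip to move monotonically from $v$ toward $u$. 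The key verification is that this reconstruction is globally \emph{consistent} and \emph{valid}: the interpolated sets form an increasing chain, each fired set is valid (no vertex goes negative), and the chip arrivals land exactly when the certificate says so. This is precisely where inequalities \eqref{eq:legroter} and \eqref{eq:lekleiner} do their work --- they guarantee that the total number of steps $\sum_{i=i_0}^{i_3} t_{w,i}$ elapsing between the first departure and last arrival along $e$ is exactly consistent with $l_e$ chips-in-transit each needing $l_e$ steps while the not-yet-arrived chips occupy between $1$ and $l_e-1$ interior positions. I would check that the constraints $t_{w,i}=1$ (from the certificate clauses for simultaneous departures and for arrivals) correctly prevent us from inserting spurious relevant sets, so that the relevant sets of the reconstructed decomposition are exactly the $A_{w,i}$.

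I expect the main obstacle to be showing that the interpolated firing sets are \emph{valid} and that the chips can be placed on the interior subdivision vertices in a mutually consistent way for all of the in-transit chips simultaneously --- i.e.\ that the bookkeeping of $M_{w,i}$, together with the Departure/Arrival/Outgoing-edges/Previous-departure/Next-arrival requirements, pins down a unique and legal configuration of chips on the interior vertices of $H$ at every intermediate time step. In particular, one must verify that at each moment each interior vertex of a subdivided edge carries at most one chip (so effectiveness is preserved) and that the firing of an interior set moves exactly the chips that the certificate demands. Once the reconstruction is shown to yield, for each $w$, a genuine equivalent divisor $D_w \sim D$ with $D_w(w) \geq 1$, the Reach all vertices requirement guarantees that every $w \in V(H)$ is reached (vertices of $H$ interior to an edge are reached as a by-product of chips transiting that edge), and hence $D$ reaches all of $H$. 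This establishes $\sdgon(G) \leq \deg(D) = k$, completing the proof.
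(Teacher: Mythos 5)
Your proposal is correct and follows essentially the same route as the paper's proof: build $H$ by subdividing each edge $e$ of $G_1$ into $l_e$ parts, replicate each relevant set $A_{w,i}$ exactly $t_{w,i}$ times, add interior vertices to the interpolated sets so that transiting chips advance one step per fired set, and use the certificate requirements together with inequalities \eqref{eq:legroter} and \eqref{eq:lekleiner} to verify consistency before invoking the Reach all vertices requirement. The only substantive difference is that the paper explicitly executes the scheduling you flag as the main obstacle (via the quantities $p = i_1 - i_0 + 1$, $q = i_3 - i_2 + 1$, $r = \sum_{i=i_0}^{i_3} t_{w,i} - q l_e$ and a floor-based assignment of interior vertices to sets, followed by an induction over the expanded sequence), and it also handles the case of an interior vertex on an edge that is never transited by any chip --- where one argues via a divisor $D_{uv} \sim D$ with chips on both endpoints --- a case your closing parenthetical overlooks.
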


	\begin{proof}
		Since $\mathcal{D}$ is a certificate for the integer linear program $\mathcal{I_C}$, we know that there is a solution $l_e$, $t_{w,i}$ for this program. For every edge $e$ of $G_1$, subdivide this edge $l_e -1$ times. Write $H$ for the resulting graph. Now consider the divisor $D$ on $H$. We will prove that $D$ reaches all vertices. 
		
		Let $w$ be a vertex of $G_1$. Consider the set $A_{w,1}, \ldots, A_{w,a_w}$. For every $1\leq i<a_w$, copy $A_{w,i}$ $t_{w,i}$ times, we obtain a sequence \begin{align*}
		A_{w,1,1}, \ldots, A_{w,1,{t_{w,1}}}, A_{w,2,1}, \ldots, A_{w,a_w-1,{t_{w,a_w-1}}}, A_{w,a_w,1}. 
		\end{align*}
		
		For every edge $e = uv$ and every set $A_{w,i,j}$ such that $u,v\in A_{w,i,j}$, add all vertices that are added to $e$ to $A_{w,i,j}$. 
		
		For every edge $e = uv$, let $i_0$ be the smallest index such that $(v,j,-1,e) \in M_{w,i_0}$ for some $j$, $i_1$ the greatest index such that $(v,j,-1,e) \in M_{w,i_1}$ for some $j$, $i_2$ the smallest index such that $(u,j,+1,e) \in M_{w,i_2}$ for some $j$ and $i_3$ the greatest index such that $(u,j,+1,e) \in M_{w,i_3}$ for some $j$. Notice that, by the Previous departure requirement, for all $i_0\leq i<i_1$ there is a chip $j$ such that $(v,j,-1,e) \in M_{w,i}$. It follows that $t_{w,i} = 1$ for all $i_0\leq i<i_1$, since we have the equalities $t_{w,i} = 1$ in $\mathcal{I_C}$ whenever $(v,j_1,-1,e) \in M_{w,i}$ and $(v,j_2,-1,e) \in M_{w,i+1}$. Analogously, for all $i_2 \leq i\leq i_3$ it holds that there is a $j$ such that $(u,j,+1,e) \in M_{w,i}$ and $t_{w,i} = 1$. 
		
		Write $p = i_1 - i_0 +1$, $q = i_3-i_2+1$ and $r = \sum_{i=i_0}^{i_3} t_{w,i} - ql_e$. Let $x_{1}, \ldots, x_{l_e-1}$ be the vertices that are added to $e$, where $x_{1}$ is the neighbour of $v$ and $x_{l_e-1}$ the neighbour of $u$. Consider the sets \begin{gather*}
		A_{w,i_0,1}, A_{w,i_0+1,1}, \ldots, A_{w,i_1,1}, \ldots, A_{w,i_1,{t_{w,i_1}}}, \\
		A_{w,i_1+1,1}, \ldots, A_{w,i_2-1,{t_{w,i_2-1}}}, A_{w,i_2,1}, A_{w, i_2+1,1}, \ldots, A_{w,i_3,1}. 
		\end{gather*} Notice that this are exactly $\sum_{i=i_0}^{i_3} t_{w,i}$ sets. 
		Because of the Departure requirement, we know that $A_{w,i_0,1}$ does contain $v$, so $v$ is an element of all these sets. Analogously, because of the Arrival requirement, we know that $u$ is not an element of $A_{w,i_3,1}$, thus $u$ is not in all those sets. 
		Now we will add the vertices $x_1, \ldots, x_{l_e-1}$ to these sets as follows:\\ 
		Add none of the $x_i$'s to the first $p$ sets, \\
		add $x_1$ to the next $p$ sets, \\
		add $x_1$ and $x_2$ to the next $p$ sets, \\
		$\ldots$, \\
		add $x_1, x_2, \ldots, x_{\lfloor \frac{r}{p-q} \rfloor - 1}$ to the next $p$ sets, \\
		add $x_1, x_2, \ldots, x_{\lfloor \frac{r}{p-q} \rfloor}$ to the next $q + r - (p-q) \lfloor \frac{r}{p-q} \rfloor$ sets, \\
		add $x_1, x_2, \ldots, x_{\lfloor \frac{r}{p-q} \rfloor+1}$ to the next $q$ sets, \\
		$\ldots$, \\
		add $x_1, x_2, \ldots, x_{l_e-1}$ to the next $q$ sets. 
		
		First we need to check whether this adding of vertices is well-defined, i.e.\ check whether $1\leq \lfloor \frac{r}{p-q} \rfloor \leq l_e-1$ and whether the number of sets we add vertices to equals the number of sets we considered: $\sum_{i=i_0}^{i_3} t_{w,i}$.
		Notice that by definition $r\geq p-q$, since $\mathcal{I_C}$ contains the inequality $(i_3 - i_2 +1) l_e + (i_1 - i_0) - (i_3 - i_2) \leq \sum_{i=i_0}^{i_3} t_{w,i}$, or equivalently $q l_e + p - q \leq \sum_{i=i_0}^{i_3} t_{w,i}$. So $\lfloor \frac{r}{p-q} \rfloor \geq 1$. Because of the inequality $(i_1-i_0+1)l_e - (i_1 - i_0) + (i_3 - i_2) \geq \sum_{i=i_0}^{i_3} t_{w,i}$, or equivalently, $pl_e - p + q \geq \sum_{i=i_0}^{i_3} t_{w,i}$, it holds that $\lfloor \frac{r}{p-q} \rfloor \leq l_e-1$. 
		The total number of sets we have added vertices to is: \begin{align*}
		&\quad p\left(\left\lfloor \frac{r}{p-q} \right\rfloor\right) + q + r - (p-q) \left\lfloor \frac{r}{p-q} \right\rfloor + q\left(l_e-1- \left\lfloor \frac{r}{p-q} \right\rfloor\right) \\
		&= q + r + ql_e-q\\
		&= \sum_{i=i_0}^{i_3} t_{w,i}, 
		\end{align*}
		which is exactly the amount of sets we considered. So adding these vertices to the sets is well-defined.

		Adding these vertices yields an increasing sequence of sets. Notice that firing these sets causes $p$ chips to leave $v$: the first $p$ sets $A_{w,i_0,1}, \ldots, A_{w,i_1,1}$ all move a chip from $v$ to $x_1$. And for all $i_0\leq i\leq i_1$, there is a tuple $(v,j,-1,e)\in M_{w,i}$ for some $j$. So firing $A_{w,i,1}$ moves a chip as described by the tuple $(v,j,-1,e)$. 
		
		Notice that if it holds that $\lfloor \frac{r}{p-q} \rfloor = l_e-1$, then it holds that $q + r - (p-q) \lfloor \frac{r}{p-q} \rfloor = q$, so $x_{l_e-1}$ is added to $q$ sets. It follows that the last $q$ sets $A_{w,i_2,1}, \ldots, A_{w,i_3,1}$ cause a chip to arrive at $u$. And for all $i_2\leq i\leq i_3$, there is a tuple $(u,j,+1,e)\in M_{w,i}$ for some $j$. So firing $A_{w,i,1}$ moves a chip as described by the tuple $(u,j,+1,e)$. 
		
		For all $i_1 < i< i_2$ there is no tuple $(x,j,\sigma,e)$ in $M_{w,i}$, and indeed, no chip moves from $v$ along $e$ or to $u$ along $e$ by firing $A_{w,i,k}$.

		Now consider the sets \begin{align*}
		A_{w,1,1}, \ldots, A_{w,1,{t_{w,1}}}, A_{w,2,1}, \ldots, A_{w,a_w-1,{t_{w,a_w-1}}}, A_{w,a_w,1}. 
		\end{align*} after adding the vertices for all edges. By induction we can prove that all chips are moved as described by the tuples in $M_{w,i}$.

		First consider $A_{w,1,1}$. For every tuple $(v,j,-1,uv) \in M_{w,1}$, we know that $v\in A_{w,1,1}$ and $u\notin A_{w,1,1}$. The Outgoing edges requirement implies that along each outgoing edge a chip moves. All chips start on vertices of $G_1$, so for every edges $uv\in E(G_1)$ with $v\in A_{w,1,1}$, $u\notin A_{w,1,1}$ there is a tuples $(v,j,-1,uv) \in M_{w,1}$ for some $j$. From the Unique departure per edge requirement it follows that for every edge $uv$ with $v\in A_{w,1,1}$, $u\notin A_{w,1,1}$ there is exactly one tuple $(v,j,-1,uv) \in M_{w,1}$. The Departure location requirement gives that chip $j$ lies on vertex $v$, and the Unique departure per chip requirement tells us that for every chip $j$ there is at most one edge $e$ such that $(u,j,-1,e) \in M_{w,1}$. So by firing $A_{w,1,1}$ chips can leave vertices of $G_1$ as precribed by the tuples in $M_{w,1}$. If for some edge $uv$ there is a tuple $(v,j,+1,uv) \in M_{w,1}$, then the Arrival location requirement gives that $(u,j,-1,uv)\in M_{w,1}$. It follows that the equation $l_{uv} = 1$ is contained in $\mathcal{I_C}$, so the edge $uv$ is not subdivides in $H$, and indeed chip $j$ will arrive at $v$ by firing $A_{w,1,1}$. If for some edge $uv$ with $u\in A_{w,1,1}$, $v\notin A_{w,1,1}$ there is no tuple $(v,j,+1,uv) \in M_{w,1}$, then it follows from equation \ref{eq:legroter} that $l_{uv} \geq 2$, so indeed, the chip that left $u$ via $uv$ by firing $A_{w,1,1}$ will not arrive at $v$ immediately. We conclude that chips arrive at vertices of $G_1$ as precribed by the tuples in $M_{w,1}$.  
		
		Now consider the set $A_{w,i,j}$. Suppose that for all set $A_{w,i',j'}$ with $i'<i$, or $i' = i$ and $j'<j$, all chips have moved as described by the tuples in $M_{w,i'}$. Now we consider two cases:
		
		First, suppose that $A_{w,i,j}$ is not a relevant set, i.e.\ $j\neq 1$. 
		We know that $A_{w,i,1}$ is the last relevant set before $A_{w,i,j}$. Because of the Outgoing edges requirement, we know that when we fired $A_{w,i,j}$ a chip moved along each edge. If a chip arrived at a vertex of $G_1$, then we know that $t_{w,i} = 1$. This yields a contradiction with the fact that $j\neq 1$. So we conclude that along each outgoing edge of $A_{w,i,j}$ a chip moves. By the construction of $A_{w,i,j}$, we know that no chip will be moved from or to a vertex of $G_1$ by firing $A_{w,i,j}$. 
		
		Now suppose that $A_{w,i,j}$ is a relevant set, i.e.\ $j = 1$. The Outgoing edges requirement implies that a chip is moved along each outgoing edge. Suppose that $(u,j,-1,e)\in M_{w,i}$. Because of the Unique departure per edge requirement, we know that there is no other tuple $(u,j',-1,e)$ in $M_{w,i}$ and because of the Unique departure per chip requirement there is no other tuple $(u,j,-1,e')$. Now the induction hypothesis and the Departure location requirement imply that chip $j$ is on vertex $u$ before firing $A_{w,i,j}$. We conclude that chip $j$ can be moved as described by the tuple $(u,j,-1,e)$. Analogouly, chips can be moved as described by the tuples $(u,j,+1,e)$ in $M_{w,i}$. 
		
		When we fired all sets $A_{w,i,j}$, all chips have moved as described by the tuples in the sets $M_{w,i}$. The Reach all vertices requirement tells us that there is a chip on $w$, so $D$ reaches $w$. 
		
		So $D$ reaches all vertices of $G_1$. Let $x$ be a vertex which is added to an edge $uv$ of $G_1$. Either a chip is fired along $e$ to reach $u$ or $v$, or there is a divisor $D_{uv} \sim D$ with a chip on $u$ and a chip on $v$. In both cases we see that $D$ reaches $x$. So $D$ reaches all vertices. We conclude that $\sdgon(G) \leq k$. 
		\end{proof}

	\begin{example}
		Again consider the graph in Figure \ref{fig:divisor} and the certificate in Example \ref{ex:example}. We can use the construction in the proof of Lemma \ref{lem:certificaatImpliceertSdgon}. Since $l_{e_1} = 3$, we subdivide $e_1$ with two vertices $x_1$ and $x_2$ and since $l_{e_2} = 4$, we subdivide $e_2$ with three vertices $y_1$, $y_2$ and $y_3$. 
		
		For $e_1$ we get $i_0 = 1$, $i_1 = 4$, $i_2 = 5$ and $i_3 = 8$. It follows that $p = 4$, $q=4$ and $r= 12 - 4\cdot 3 = 0$. Now adding the vertices $x_1$ and $x_2$ to the sets yields: \begin{align*}
		A_{v,1,1} = \{u\}, A_{v,2,1} = \{u\}, A_{v,3,1} = \{u\}, A_{v,4,1} = \{u\}, A_{v,4,2} = \{u, x_1\}, \\ A_{v,4,3} = \{u, x_1\},
		A_{v,4,4} = \{u, x_1\}, A_{v,4,5} = \{u, x_1\}, A_{v,5,1} = \{u, x_1, x_2\},\\
		A_{v,6,1} = \{u, x_1, x_2\}, A_{v,7,1} = \{u, x_1, x_2\}, A_{v,8,1} = \{u, x_1, x_2\}.
		\end{align*}
		Analogously for $e_2$, we add the vertices $y_1$, $y_2$ and $y_3$:  \begin{align*}
		\{u\},  \{u\}, \{u\},  \{u, y_1\},
		\{u, x_1, y_1\},  \{u, x_1, y_1\}, \\
		\{u, x_1, y_1,y_2\}, 
		 \{u, x_1,y_1, y_2\}, 
		\{u, x_1, x_2,y_1, y_2\}, \\
		 \{u, x_1, x_2,y_1, y_2, y_3\}, 
	 \{u, x_1, x_2,y_1, y_2, y_3\},
		 \{u, x_1, x_2,y_1, y_2, y_3\}.
		\end{align*}
		We see that we obtained the same subdivision and firing sets as we started with in Example \ref{ex:example}. 
	\end{example}

    	As ILP's have certificates with polynomially many bits (see e.g., \cite{Papadimitriou81}), and the 
partial certificate is of polynomial size (see also Lemma~\ref{lemma:boundrelevant}), we have that, using Lemmas~\ref{lemma:sdgonimpliceertcertificaat} and \ref{lem:certificaatImpliceertSdgon},
the problem whether a given graph has divisorial gonality at most a given integer $k$ has a polynomial
certificate, which gives our main result.

\begin{theorem}
\textsc{Stable Divisorial Gonality} belongs to the class NP.
\end{theorem}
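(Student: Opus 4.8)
The plan is to present $(\mathcal{C},\mathcal{D})$ as the NP-certificate and to verify the two things that membership in NP demands: that a certificate exists exactly for yes-instances, and that it has polynomial size and is verifiable in polynomial time. The equivalence is already in hand. Lemma~\ref{lemma:sdgonimpliceertcertificaat} produces a certificate whenever $\sdgon(G)\le k$, and Lemma~\ref{lem:certificaatImpliceertSdgon} shows conversely that the existence of any certificate forces $\sdgon(G)\le k$. Thus the logical correctness of the scheme is finished, and the entire remaining proof reduces to a size-and-time audit of $(\mathcal{C},\mathcal{D})$.

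For the size bound I would first reduce to the case where $k$ is polynomially bounded in $|V|+|E|$. Since divisorial gonality dominates stable divisorial gonality and is itself bounded by a linear function of the genus $g=|E|-|V|+1$ by a Brill--Noether-type upper bound (cf.\ \cite{Baker}), any input integer exceeding this bound yields a trivial yes-instance; hence we may assume $k=O(|V|+|E|)$. With $k$ polynomial, the divisor $D$ on $G_1$ --- a graph with $O(|V|+|E|)$ vertices --- together with its $k$ labelled chips is of polynomial size. For the firing data, Lemma~\ref{lemma:boundrelevant} bounds the number of relevant sets per vertex $w$ by $2kn+n$; each stored $A_{w,i}\cap V(G_1)$ is a subset of $V(G_1)$, and each $M_{w,i}$ is a set of tuples $(v,j,\sigma,e)$ whose number is bounded by the chip movements across edges incident to $V(G_1)$. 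Hence the whole partial certificate $\mathcal{C}$ has size polynomial in $|V|+|E|$.

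Next I would bound $\mathcal{D}$. The program $\mathcal{I_C}$ has one variable $l_e$ per edge of $G_1$ and one variable $t_{w,i}$ per consecutive pair of relevant sets, so polynomially many variables, and its constraints are read off mechanically from $\mathcal{C}$ with coefficients of polynomial bit-length (the indices $i_0,\dots,i_3$ and the counts in the sums $\sum t_{w,i}$ are all polynomially bounded). Invoking the classical fact that a feasible integer program admits a solution, and therefore a certificate, of bit-length polynomial in the encoding length of the program (e.g.\ \cite{Papadimitriou81}), the ILP certificate $\mathcal{D}$ is of polynomial size as well; the same estimate simultaneously yields the corollary that the total number of subdivisions $\sum_e (l_e-1)$ is at most $2^{p(n)}$.

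Finally, verification runs in polynomial time: each structural requirement on $\mathcal{C}$ (Incidence, Departure, Arrival, the four uniqueness conditions, Immediate arrival, Departure and Arrival location, Outgoing edges, Previous departure, Next arrival, and Reach all vertices) is a local predicate quantified over polynomially many sets and tuples, hence checkable in polynomial time, and testing that $\mathcal{D}$ satisfies every inequality of $\mathcal{I_C}$ is a direct arithmetic evaluation. Combining the equivalence with the polynomial size and the polynomial verification time gives the theorem. I expect the main obstacle to be the size audit rather than the logic --- specifically, ensuring that no component blows up --- and the one genuinely delicate point is controlling $k$, which is why I would front-load the reduction to a polynomially bounded $k$ via the gonality upper bound; once that is secured, every remaining estimate follows from Lemma~\ref{lemma:boundrelevant} and the standard polynomial bound on ILP certificates.
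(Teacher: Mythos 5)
Your proposal is correct and takes essentially the same route as the paper: the certificate $(\mathcal{C},\mathcal{D})$, the equivalence supplied by Lemmas~\ref{lemma:sdgonimpliceertcertificaat} and~\ref{lem:certificaatImpliceertSdgon}, the polynomial size audit via Lemma~\ref{lemma:boundrelevant}, and the polynomial bound on ILP certificates from \cite{Papadimitriou81}. Your front-loaded reduction to polynomially bounded $k$ is a sound refinement the paper leaves implicit, though it needs no Brill--Noether machinery: placing one chip on every vertex shows $\sdgon(G)\le |V(G)|$, so any $k\ge |V(G)|$ is a trivial yes-instance.
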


Combined with the NP-hardness by Gijswijt~\cite{Gijswijt}, this yields the following theorem.

\begin{theorem}
\textsc{Stable Divisorial Gonality} is NP-complete.
\end{theorem}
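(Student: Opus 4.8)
The plan is to assemble the two standard ingredients of an NP-completeness proof: membership in NP and NP-hardness. Membership in NP is exactly the preceding theorem, obtained through the partial-certificate-plus-ILP machinery, so the only thing that remains is to invoke an NP-hardness result and check that it genuinely applies to the \emph{stable} variant rather than to ordinary divisorial gonality.

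For NP-hardness I would invoke the reduction of Gijswijt~\cite{Gijswijt}, which was designed to prove that ordinary divisorial gonality is NP-hard. That reduction produces, from an instance of some NP-hard source problem, a graph $G$ together with a threshold $k$ such that the source instance is a yes-instance if and only if $\dgon(G) \le k$. To transfer this to \textsc{Stable Divisorial Gonality} I would verify that on exactly these constructed instances one has $\dgon(G) \le k$ if and only if $\sdgon(G) \le k$. One direction is free: since subdividing edges can only help, $\sdgon(G) \le \dgon(G)$ always holds, so every yes-instance of the source problem still yields $\sdgon(G) \le k$. The content of the transfer therefore lies entirely in the no-direction, where one must check that for a no-instance no subdivision of $G$ can lower the gonality to $k$, i.e. that $\sdgon(G) > k$. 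This is exactly what Gijswijt's analysis establishes, and it is precisely why the same reduction certifies hardness of both parameters simultaneously.

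The main (and essentially only) obstacle has already been overcome in the earlier sections, namely the proof that \textsc{Stable Divisorial Gonality} lies in NP via Lemmas~\ref{lemma:sdgonimpliceertcertificaat} and \ref{lem:certificaatImpliceertSdgon}. Granting that, together with the hardness transfer above, the final step is a one-line corollary: a problem that is simultaneously in NP and NP-hard is NP-complete by definition. I would therefore state the conclusion directly, combining the preceding theorem with the hardness result of~\cite{Gijswijt}, with no further computation required.
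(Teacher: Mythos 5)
Your proposal is correct and follows the paper's own route exactly: membership in NP from the preceding theorem, NP-hardness via Gijswijt's reduction (which the paper, like you, observes transfers to the stable variant because the reduction's analysis bounds $\sdgon$ in the no-direction while $\sdgon(G) \le \dgon(G)$ handles the yes-direction), and the standard definition of NP-completeness. The only difference is cosmetic: the paper asserts the hardness transfer by citation alone, whereas you spell out why the same reduction certifies both parameters, which is a harmless elaboration of the same argument.
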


	\section{A bound on subdivisions}
	
	In this section, we give as corollary of our main result a bound on the number of subdivisions needed.
	We use the following result by Papadimitriou~\cite{Papadimitriou81}.
	
	\begin{theorem}[Papadimitriou~\cite{Papadimitriou81}]
	Let $A$ be an $m \times n$ matrix, and $b$ be a vector of length $m$, such that each value in $A$ and $b$ is an integer in the
	interval $[-a,+a]$. If $Ax=b$ has a solution with all values positive integers, then $Ax=b$ has a solution with all values
	positive integers that are at most $n(ma)^{2m+1}$.
	\label{theorem:papadimitriou}
	\end{theorem}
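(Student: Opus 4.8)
The plan is to prove the bound by first stripping away the positivity requirement, then controlling the size of a suitably minimal nonnegative solution through a pigeonhole argument on partial column-sums, with the numerical factor supplied by a determinant estimate.

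First I would eliminate positivity through the substitution $x = y + \mathbf{1}$, where $\mathbf{1}$ is the all-ones vector. This converts the hypothesis ``$Ax = b$ has a positive integer solution'' into ``$Ay = b'$ has a nonnegative integer solution,'' where $b' = b - A\mathbf{1}$ still has integer entries, now bounded by $|b'_\ell| \le a + na = (n+1)a$. Since $x_j = y_j + 1$, any bound of the form $y_j \le N$ immediately gives $x_j \le N+1$, so it suffices to exhibit a small nonnegative solution of $Ay = b'$. I would also discard linearly dependent rows, so that $A$ may be assumed to have full row rank $m$; this only enlarges the solution set and keeps every entry in $[-a,a]$.

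The heart of the argument treats $Ay = b'$ with $y \ge 0$ integral. I would choose a solution minimizing $\sum_j y_j$ and regard it as a multiset using column $a_j$ exactly $y_j$ times. Arranging these $M = \sum_j y_j$ column-copies in a sequence yields partial sums $s_0 = 0, s_1, \dots, s_M = b'$ in $\mathbb{Z}^m$. Minimality forces all the $s_i$ to be distinct: if $s_i = s_{i'}$ for $i < i'$, then the column-copies strictly between them form a sub-multiset summing to $0$, and deleting them gives a nonnegative solution with strictly smaller $\sum_j y_j$, a contradiction. Hence $M$ is at most the number of distinct lattice points that can occur as partial sums, and since every $y_j \le M$, bounding this count bounds the whole solution.

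The delicate step, and the one I expect to be the main obstacle, is confining all the partial sums $s_i$ to a box whose side length is controlled, so that the lattice-point count is both finite and of the right order. Here I would invoke a Steinitz-type reordering of the column-copies, which produces an ordering in which every partial sum stays within distance $O(ma)$ in the $\ell_\infty$ norm of the segment from $0$ to $b'$; combined with $\|b'\|_\infty \le (n+1)a$ this confines the $s_i$ to a box of side $O((n+m)a)$. A crude count of lattice points in this box already bounds $M$, but matching the stated constant requires sharpening the estimate via Hadamard's inequality applied to the $m \times m$ subdeterminants of $A$, each of which is at most $(\sqrt{m}\,a)^m$ and hence of order $(ma)^m$; this refinement is what produces the factor $(ma)^{2m+1}$, with the leading $n$ accounting for the number of variables. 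Finally I would translate the bound back through $x = y + \mathbf{1}$ to conclude that $Ax = b$ has a positive integer solution with every entry at most $n(ma)^{2m+1}$. The genuine subtlety is reconciling the deletion (cycle-cancellation) step, which wants an arbitrary ordering, with the reordering that confines the partial sums to the box, and making both compatible while preserving nonnegativity is where the real care is needed.
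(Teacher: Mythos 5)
First, a framing point: the paper does not prove this theorem at all — it is imported verbatim, with citation, from Papadimitriou's 1981 paper, so there is no in-paper proof to compare against. The classical proofs of such bounds (Borosh--Treybig, von zur Gathen--Sieveking, Papadimitriou) run through Cramer's rule and subdeterminant estimates. Your route — pass to nonnegativity, take a solution minimizing $\sum_j y_j$, cancel zero-sum sub-multisets detected by repeated prefix sums, and confine prefix sums via a Steinitz-type reordering — is instead the modern Eisenbrand--Weismantel approach, and its skeleton is sound. In particular, the tension you flag at the end is illusory: fix the Steinitz ordering \emph{first}; minimality of $\sum_j y_j$ then forces the prefix sums of that particular ordering to be pairwise distinct, since deleting the column-copies between two equal prefix sums yields a strictly smaller nonnegative solution, and that smaller solution need not inherit any ordering at all. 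So the two steps compose with no extra care.

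The genuine gap is quantitative. Confining all $M+1$ prefix sums to a box of side $O((n+m)a)$ gives only $M = O\bigl(((n+m)a)^m\bigr)$, i.e.\ roughly $n^m a^m$, which does \emph{not} imply the stated bound $n(ma)^{2m+1}$ once $n \gg m$: for $m=2$, $a=1$ your count gives $\Theta(n^2)$ against a target of $\Theta(n)$. Your proposed repair — ``sharpening via Hadamard's inequality on $m \times m$ subdeterminants'' — is a non sequitur here: Hadamard bounds power the Cramer-based proofs, but you give no mechanism by which a subdeterminant bound would shrink a lattice-point count in your box, and there is none. The correct repair inside your own framework is geometric, not determinantal: the Steinitz reordering confines every prefix sum to within $\ell_\infty$-distance $O(ma)$ of the \emph{segment} from $0$ to $b'$ (treat $M \le n$ separately, so the drift term $\|b'\|_\infty/M$ in the reordering argument is absorbed), and the number of lattice points in this tube is at most $\bigl(\|b'\|_\infty + O(ma)\bigr)\cdot O(ma)^{m-1} = O\bigl(na(ma)^{m-1} + (ma)^m\bigr)$: the tube's length contributes the single factor of $n$, its cross-section the $(ma)^{m-1}$. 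That sits comfortably inside $n(ma)^{2m+1}$, and then $x_j = y_j + 1 \le M+1$ finishes the proof. As written, however, your counting step does not reach the claimed bound.
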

	
	\begin{corollary}
	Let $G$ be a graph with stable divisorial gonality $k$.
	There is a graph $H$, that is a subdivision of $G$, with the divisorial gonality of $H$ equal to the
	stable divisorial gonality of $G$, and each edge in $H$ is obtained by subdividing an edge from $G$ at most
	$m^{O(km^2)}$ times.
	\end{corollary}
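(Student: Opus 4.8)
The plan is to turn the certificate whose existence is guaranteed by Lemma~\ref{lemma:sdgonimpliceertcertificaat} into a \emph{bounded} one by applying Papadimitriou's theorem (Theorem~\ref{theorem:papadimitriou}) to the integer linear program $\mathcal{I_C}$. Since $\sdgon(G)=k$, Lemma~\ref{lemma:sdgonimpliceertcertificaat} produces a partial certificate $\mathcal{C}$ together with a feasible solution $\mathcal{D}$ of $\mathcal{I_C}$; in particular $\mathcal{I_C}$ is feasible. Papadimitriou's theorem then replaces this solution by one all of whose entries, and in particular all of whose edge lengths $l_e$, are at most $N(Ma)^{2M+1}$, where $N$, $M$ and $a$ are the number of variables, the number of constraints, and the largest absolute value of an entry of $\mathcal{I_C}$. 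Feeding the bounded solution into the construction of Lemma~\ref{lem:certificaatImpliceertSdgon} produces a subdivision $H$ of $G$ on which the degree-$k$ divisor $D$ reaches every vertex, so $\dgon(H)\le k$; and since $H$ is a subdivision of $G$ we have $\dgon(H)\ge\sdgon(G)=k$ by the definition of stable divisorial gonality, giving the required equality. As each edge of $G$ becomes two edges of $G_1$, each in turn subdivided $l_e-1$ times, every edge of $G$ is subdivided at most $2\max_e l_e$ times, so it remains only to show $2\max_e l_e=m^{O(km^2)}$.

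First I would bound the shape of $\mathcal{I_C}$. By Lemma~\ref{lemma:boundrelevant}, for each vertex $w$ the number $a_w$ of relevant level sets is $O(k|V(G_1)|)=O(k(n+m))$; hence $\mathcal{I_C}$ has $O(k(n+m)^2)$ variables $t_{w,i}$ together with the $2m$ length variables $l_e$, and its number of constraints is likewise $O(k(n+m)^2)$. The only entries of absolute value larger than $1$ occur in \eqref{eq:legroter} and \eqref{eq:lekleiner}: the coefficients $i_1-i_0+1$ and $i_3-i_2+1$ of $l_e$ and the associated constants, each bounded by $a_w=O(k(n+m))$. Since $G$ is connected, $n\le m+1$, so $n+m=O(m)$, giving $N,M=O(km^2)$ and $a=O(km)$, with all entries of $A$ and $b$ integral.

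Next I would bring $\mathcal{I_C}$ into the equality form $Ax=b$ with positive-integer variables required by Theorem~\ref{theorem:papadimitriou}. Every inequality $\mathrm{LHS}\ge\mathrm{RHS}$ is rewritten as $\mathrm{LHS}-s=\mathrm{RHS}-1$ with a fresh slack variable $s$; a feasible solution of $\mathcal{I_C}$ then yields $s=\mathrm{LHS}-\mathrm{RHS}+1\ge1$, and conversely any solution of the equality system with all variables $\ge1$ satisfies the original inequalities (the equalities $l_e=1$ and $t_{w,i}=1$ need no slack, and $l_e,t_{w,i}\ge1$ holds already). This at most doubles $N$ and $M$ and keeps $a=O(km)$. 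As $\mathcal{D}$ witnesses feasibility, Theorem~\ref{theorem:papadimitriou} supplies a positive-integer solution with every entry at most $N(Ma)^{2M+1}$. Using the standard fact that $k=\sdgon(G)=O(m)$ (gonality is bounded linearly in the genus $m-n+1$ of $G$), we get $Ma=O(k^2m^3)=m^{O(1)}$ and $2M+1=O(km^2)$, so $N(Ma)^{2M+1}=m^{O(km^2)}$. Hence $\max_e l_e=m^{O(km^2)}$, and each edge of $G$ is subdivided at most $2\max_e l_e=m^{O(km^2)}$ times.

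The main obstacle is the bookkeeping in the two middle steps: pinning down that $N$, $M$ and $a$ really have the stated polynomial orders, and that the passage to standard form disturbs them only by constant factors, so that the nested exponent $2M+1$ is genuinely $O(km^2)$ while the base $Ma$ stays polynomial in $m$. The latter is exactly where the a priori bound $k=O(m)$ is indispensable: without a polynomial bound on $k$, $\log_m(Ma)$ need not be $O(1)$ and the exponent in $N(Ma)^{2M+1}$ could exceed $O(km^2)$. Everything else — producing $H$ from the bounded solution and deducing $\dgon(H)=\sdgon(G)=k$ — is immediate from Lemma~\ref{lem:certificaatImpliceertSdgon} together with the definition of stable divisorial gonality as the minimum over all subdivisions.
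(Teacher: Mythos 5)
Your proof is correct and follows essentially the same route as the paper's: obtain a feasible $\mathcal{I}_{\mathcal{C}}$ from the certificate of Lemma~\ref{lemma:sdgonimpliceertcertificaat}, pass to equality form with $N,M=O(km^2)$ and invoke Theorem~\ref{theorem:papadimitriou}, then rebuild $H$ from the bounded values $l_e$ via the construction of Lemma~\ref{lem:certificaatImpliceertSdgon}. You are in fact more scrupulous than the paper on one point: the paper asserts that all entries of $A$ and $b$ lie in $\{-1,0,1\}$, whereas the coefficients $i_1-i_0+1$ and $i_3-i_2+1$ in \eqref{eq:legroter} and \eqref{eq:lekleiner} can be as large as $\Theta(k(n+m))$; your bound $a=O(km)$, combined with $k=m^{O(1)}$ (for which the trivial bound $k\le n\le m+1$ from the divisor with one chip per vertex suffices---no need to invoke a Brill--Noether-type genus bound), absorbs this into the same final estimate $m^{O(km^2)}$, and a polynomial bound on $k$ is implicitly needed in the paper's last displayed computation as well.
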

	
	\begin{proof}
	By the results of the previous section, we know that there is a certificate whose corresponding ILP has a solution.	
	The values
	$l_e$ in this solution give the number of subdivisions of edges in $G_1$. 
    If we have an upper bound on the number of subdivisions
	of edges in $G'$, say $\alpha$, then $2\alpha+1$ is an upper bound on the number of edges in $G$. 
    Applying Theorem~\ref{theorem:papadimitriou} to the ILP gives such a bound, as described below.

	The ILP has at most $n' \cdot (2kn'+n')$ variables of the
	form $t_{w,i}$, by Lemma~\ref{lemma:boundrelevant}, and $m'$ variables of the form $l_e$, with $n'$ the number of vertices
	in $G_1$ and $m'$ the number of edges in $G_1$. We have $n'=n+m$, and $m'=2m$, with $n'$ the number of vertices of $G$ and $m$
	the number of edges of $G$.
	
	The number of equations and inequalities in the ILP is linear in the number of variables.
An inequality can be replaced by an equation by adding one variable. This gives a total of
$O(kn'^2+m')$ variables and $O(kn'^2+m')$ equations. Note that $O(kn'^2+m') = O(km^2)$; as $G$ is
connected, $n\leq m-1$.
Also, note all values in matrix $A$ and vector $b$
are $-1$, $0$, or $1$, i.e., we can set $a=1$ in the application of Theorem~\ref{theorem:papadimitriou}.
So, by Theorem~\ref{theorem:papadimitriou}, we obtain that if there is a solution to the ILP, then there
is one where all variables are set to values at most 
\[ O(kn'^2+m') \cdot O(kn'^2+m')^{O(kn'^2+m')} = O(km^2) \cdot O(km^2)^{O(km^2)} = m^{O(km^2)} \]
	
	When taking $k$ the stable divisorial gonality of $G$, we know there is at least one certificate
    with a solution, so we can bound the number of subdivisions in $G_1$ by $m^{O(km^2)}$, which
    gives our result.
	\end{proof}

	\section{Conclusion}
	In this paper, we showed that the problem to decide whether the stable divisorial gonality of a given graph is at most a given number
$K$ belongs to the class NP. Together with the NP-hardness result of Gijswijt~\cite{Gijswijt}, this shows that the problem is NP-complete.
We think our proof technique is interesting: we give a certificate that describes some of the essential aspects of the firing sequences; whether there is a subdivision of the graph for which this certificate
describes the firing sequences and thus gives the subdivision that reaches the optimal divisorial gonality
can be expressed in an integer linear program. Membership in NP then follows by adding the certificate
of the ILP to the certificate for the essential aspects.

As a byproduct of our work, we obtained an upper bound on the number of subdivisions needed to reach
a subdivision of $G$ whose divisorial gonality gives the stable divisorial gonality of $G$. Our 
upper bound still is very high, namely exponential in a polynomial of the size of the graph. An
interesting open problem is whether this bound on the number of needed subdivisions can be replaced by a polynomial in the size of the graph. Such a result would give an alternative (and probably easier) proof
of membership in NP: first guess a subdivision, and then guess the firing sequences.

There are several open problems related to the complexity of computing the (stable) divisorial gonality of graphs. Are these problems
{\em fixed parameter tractable}, i.e., can they be solved in $O(f(K)n^c)$ time for some function $k$ and constant $c$? Or can they be proven to be
$W[1]$-hard, or even, is there a constant $c$, such that deciding if (stable) divisorial gonality of a given graph $G$ is at most $c$ is
already NP-complete? Also, how well can we approximate the divisorial gonality or stable divisorial gonality of a graph?

	\subsection*{Acknowledgements} We thank Gunther Cornelissen and Nils Donselaar for helpful discussions.


\end{document}